\documentclass[aps,prl,reprint,groupedaddress]{revtex4-1}

\usepackage{amsmath}
\usepackage{amssymb}
\usepackage{graphicx}
\usepackage[thmmarks]{ntheorem}
\hypersetup{colorlinks=true,breaklinks=true}

\DeclareMathOperator*{\rank}{rank}
\DeclareMathOperator*{\sign}{sgn}

\makeatletter
\newsavebox\myboxA
\newsavebox\myboxB
\newlength\mylenA
\newcommand*\xoverline[2][0.75]{%
	\sbox{\myboxA}{$\m@th#2$}%
	\setbox\myboxB\null% Phantom box
	\ht\myboxB=\ht\myboxA%
	\dp\myboxB=\dp\myboxA%
	\wd\myboxB=#1\wd\myboxA% Scale phantom
	\sbox\myboxB{$\m@th\overline{\copy\myboxB}$}%  Overlined phantom
	\setlength\mylenA{\the\wd\myboxA}%   calc width diff
	\addtolength\mylenA{-\the\wd\myboxB}%
	\ifdim\wd\myboxB<\wd\myboxA%
	\rlap{\hskip 0.5\mylenA\usebox\myboxB}{\usebox\myboxA}%
	\else
	\hskip -0.5\mylenA\rlap{\usebox\myboxA}{\hskip 0.5\mylenA\usebox\myboxB}%
	\fi}
\makeatother

\theoremheaderfont{\itshape}\theorembodyfont{\upshape}
\theoremnumbering{arabic}
\theoremseparator{.}
\theorempreskip{0ex}\theorempostskip{0ex}
\newtheorem{definition}{Definition}
\newtheorem{proposition}{Proposition}
\newtheorem{remark}{Remark}
\newtheorem{corollary}{Corollary}

\theoremheaderfont{\itshape}\theorembodyfont{\upshape}
\theoremstyle{nonumberplain}
\theoremseparator{.}
\theoremsymbol{\ensuremath\square}
\newtheorem{proof}{proof}

\begin{document}
\title{Characterizing Incomparability in Quantum Information}
\date{\today}
\author{Liwen Hu}
\email{huliwen@bit.edu.cn}
\affiliation{School of Physics, Beijing Institute of Technology, Beijing 100081, China}

\begin{abstract}
	The theory of majorization has seen substantial application in quantum information. Its framework predicates on the comparability between real vectors. We explore the antithesis of this premise, namely, \emph{incomparability}. Specifically, we provide ways to measure the incomparability between a pair of spectra. We show that distinct spectra isoentropic by generalized entropies are \emph{mutually} strongly incomparable. The inversion rank is proposed to classify incomparability. Majorization relations are advanced to probe the scale of incomparability, referred to as \emph{inconvertibility}.
\end{abstract}
%\pacs{}
%\keywords{}
\maketitle

\textit{Introduction.}\textemdash The theory of majorization~\cite{marshall2010} traces its roots to the characterization of inequalities and has deepseated foundation in matrix theory. In its incipient stages of application, majorization has become instrumental in the explanation of many crucial aspects of quantum physics. Examples include the deterministic conversion of quantum states~\cite{nielsen1999,du2015,zhu2017,gour2015}, the characterization of mixing and measurement~\cite{nielsen2001a}, the detection of separability~\cite{nielsen2001b}, the formulation of uncertainty relations~\cite{partovi2011} and the framework of quantum relative Lorenz curves~\cite{buscemi2017}, etc. Majorization permits only a preorder on real vectors, its utility in large relies on their comparability while cases proving otherwise are deemed undesirable.

The notion of \emph{incomparability} is a prevalent theme in order theory and an ubiquitous phenomenon in nature. First educed by Nielsen~\cite{nielsen2001a}, majorization incomparability was used to account for the restrictions placed on entanglement transformation. Work has been done to uncover its underlying nature~\cite{zyczkowski2002,clifton2002,chattopadhyay2007,chattopadhyay2008}, however such understanding is far from exhaustion. Here, we examine majorization incomparability under a quantum setting in hopes of facilitating such discussions.

As a preliminary, let us outline the basic principles and provide some necessary notations. For $d$-dimensional density matrix $\rho$, its spectrum $\lambda[\rho]$ occupies a $(d-1)$-dimensional spectral simplex $\Delta_{d - 1}$ which can be partitioned into $d!$ minor simplices called \emph{Weyl chambers}, each containing the complete set of spectra but with different order~\cite{zyczkowski2002}. We arrange $\lambda[\rho]$ in nonincreasing order as a vector $r = \lambda^{\downarrow}[\rho] \in \mathcal{W}_{d - 1}$ such that $1 \ge r_i \ge r_{i + 1} \ge 0, i \in \{ 1, \ldots, d - 1 \}$, where $\mathcal{W}_{d - 1}$ is the one with nonincreasing order, which we henceforth merely refer to as the Weyl chamber. It pertains to the scope of our analysis. For such spectral vectors $r$ and $s$, if they satisfy $A_j(r) \equiv \sum_{i=1}^{j} r_i \geq \sum_{i=1}^{j} s_i \equiv A_j(s), j \in \{ 1, \dots, d \}$, with equality when $j=d$, then $r$ majorizes $s$, i.e., $r \succ s$ or $\rho \succ \sigma$. Alternatively, we write $A_j'(r) = 1 - A_{j - 1}(r)$ for summation in nondecreasing order. We say the majorization is \emph{exact} if there exists a coincidence $A_j(r) = A_j(s)$ where $j < d$, i.e., $r \succeq s$. We let $\mathcal{M}_{d - 1}^-(r) \equiv \{ r' \in \mathcal{W}_{d - 1} : r \succ r' \}$ be the set of nonincreasing spectra $r$ majorizes in a $d$-system and let $\mathcal{M}_{d - 1}^+(r) \equiv \{ r' \in \mathcal{W}_{d - 1} : r' \succ r \}$ be the set of those majorizing $r$. There are instances when $r \nsucc s$ and $s \nsucc r$ meaning neither party majorizes the other, nonetheless, if the final equality holds then we say that $r$ and $s$ are incomparable, i.e., $r \nsim s$. We let $\mathcal{I}_{d - 1} \equiv \{ r' \in \mathcal{W}_{d - 1} : r' \nsim r \}$  be the set incomparable to $r$. There are times when $r$ does not necessarily majorize $s$, but do so by appending a catalyst $c$~\cite{jonathan1999}, i.e., $r \otimes c \succ s \otimes c$. We say that $r$ trumps $s$, i.e., $r \succ_T s$. When $r \nsucc_T s$ and $s \nsucc_T r$ they are called strongly incomparable~\cite{bandyopadhyay2002,duan2005}, i.e., $r \nsim_T s$.

We adopt the viewpoint of deterministic state conversion in quantum resource theories of which we highlight three. Regarding entanglement, the bipartite pure state transformation $|\psi \rangle \rightarrow |\phi \rangle$ eventuates via local operation and classical communication iff the Schmidt vector of $|\phi \rangle$ majorizes that of $|\psi \rangle$~\cite{nielsen1999}. Similarly for coherence, the pure state transformation $|\psi \rangle \rightarrow |\phi \rangle$ eventuates via incoherent operation iff the dephasing of $|\phi \rangle$ majorizes that of $|\psi \rangle$~\cite{du2015,zhu2017}. As for the resource theory of nonuniformity, the transformation $\rho \rightarrow \sigma$ eventuates via unital operation iff $\rho \succ \sigma$~\cite{gour2015}. The majorizing party holds less resource for the first two theories, but more resourceful for the last.

When state conversion cannot occur with certainty, especially in the event of strong incomparability, where even assistance does not guarantee success, it is reasonable to inquire the modifications a source or end state has to undergo for assured transformation. Although this may constitute costly operation, our interest diverges from that of resource theories and is not consigned to free operations. If the amount of resource a state possesses, when in excess, is of no immediate consequence to the realization of a task, the lack thereof from a bare minimum is then of more import. Incomparability is a condition to be extricated from, in some sense it implicates a notion of ``anti-resource'' that needs to be expended.

Aside from quantifying incomparability, the strongly incomparable nature of distinct isoentropic spectra is analyzed, the concept of incomparability is refined through categorization by inversion rank, representative spectral families are developed to ultimately survey the scale of incomparability, referred to as \emph{inconvertibility}.

\textit{Measuring incomparability.}\textemdash To gauge the incomparability of a pair of spectra we need to determine the expenditure necessary to render them comparable by a particular means with optimality. By this logic, we demonstrate characteristic ways to quantify incomparability, including approaches ranging operational, distansal and algebraic perspectives. They are designed to be nonnegative, vanishing for comparable pairs and symmetric.

Quantum states tend toward entropy increase under quantum operations. We capture this through measurements that enable majorization.

\begin{definition}
	Let $\rho, \sigma \in \mathcal{B}(\mathcal{H}_d)$, the majorization cost of $\rho$ with respect to $\sigma$ can be expressed as
	\begin{equation}
		M(\rho | \sigma) = \inf \{ S(\Lambda_\pi[\rho]) - S(\rho) : \sigma \succ \Lambda_\pi[\rho] \},
	\end{equation}
	where $S(\rho)$ is the von Neumann entropy, infimum is taken over measurements $\Lambda_\pi[\rho]$ in the bases $\left\lbrace \pi = \{ \Pi_j \}_{i=1}^d \right\rbrace $. Their operational incomparability is the minimum of their majorization costs:
	\begin{equation}
		I_\mathrm{O}(\rho, \sigma) = \min \{ M(\rho | \sigma), M(\sigma | \rho) \}.
	\end{equation}
\end{definition}

$M(\rho | \sigma)$ is simply the minimal entropy cost of measuring $\rho$ for $\sigma$ to majorize it. We know from the Schur-Horn theorem~\cite{marshall2010} that $\mathcal{M}_{d - 1}^-(r)$ is completely contained within the diagonal entries of possible density matrices for $r$, which are obtained by $\Lambda_\pi[\rho]$. Thus, deriving $M(\rho | \sigma)$ and $M(\sigma | \rho)$ translates to finding their optimal posterior spectra $r^\star = s^\star \in \mathcal{M}_{d - 1}^-(r) \cap \mathcal{M}_{d - 1}^-(s)$ which coincide. Wherefore, $I_\mathrm{O}(\rho, \sigma) = S(r^\star) - \max \{S(\rho), S(\sigma)\}$. Hence, by Uhlmann's theorem~\cite{marshall2010} supplanting $\Lambda_\pi[\rho]$ with general bistochastic operations yields the same result.

We present a general framework for distance measures, unlike before the party under optimization may now move up and down the majorization hierarchy.

\begin{definition}
	Let $r, s \in \mathcal{W}_{d-1}$, the ascending and descending majorization distances of $r$ with respect to $s$ can be expressed as
	\begin{align}
		D^\leftarrow(r | s) & = \inf \{ \| r' - r \| : r' \succ s \},\\
		D^\rightarrow(r | s) & = \inf \{ \| r' - r \| : s \succ r' \},
	\end{align}
	where ``$\leftarrow$'' and ``$\rightarrow$'' each stands for $r$ moving up and down the majorization order. Their incomparability distance is the minimum of their majorization distances:
	\begin{equation}
		\begin{array}{cc}
		I_\mathrm{D}(r, s) = \min \{ D^\alpha(r | s), D^\alpha(s | r) \}, & \alpha = \leftarrow, \rightarrow.
		\end{array}
	\end{equation}
\end{definition}

When $D$ is taken to be the trace distance, $D^\leftarrow(s | r)$ corresponds to the maximal fidelity achievable in a faithful entanglement transformation with known form~\cite{vidal2000}. Here, we care not for which one majorizes which, all such distances are found and their infimum serves as the incomparability of a spectral pair.

\begin{figure}
	\centering
	\includegraphics[height=0.130\textwidth]{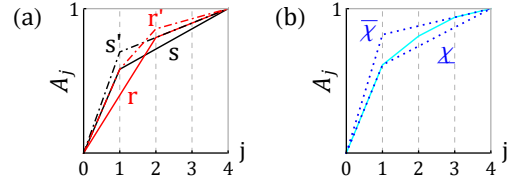}
	\caption{\label{fig:lorenz}(a) Lorenz curves of $r=(0.4, 0.4, 0.1, 0.1)$ (solid red), $s=(0.58, 0.14, 0.14, 0.14)$ (solid black), $r'=(0.58, 0.28, 0.07, 0.07)$ (dotdashed red), $s'=(0.7, 0.1, 0.1, 0.1)$ (dotdashed black). $r \nsim s$. $r' \succeq s$, $s' \succeq r$. (b) Lorenz curves of pure depolarization bounds. $\overline{\chi}=(0.82, 0.06, 0.06, 0.06) \succeq (0.61, 0.20, 0.13, 0.06) \succeq (0.61, 0.13, 0.13, 0.13)=\underline{\chi}$.}
\end{figure}

For the algebraic method we intend to "stretch" the Lorenz curve~\cite{marshall2010,buscemi2017} of $r$ upwards to majorize $s$ and vice versa, whichever gives the least distortion is used to represent incomparability. We know from probabilistic entanglement conversion, the optimal probability for $|\psi \rangle \rightarrow |\phi \rangle$, with respective Schmidt vectors $r$ and $s$, is specified by $Q(s | r) \equiv P(|\psi \rangle \rightarrow |\phi \rangle) = \min_{j \in \{1, \ldots, d\}} A_j'(r) / A_j'(s)$~\cite{vidal1999}. It suits our motive to determine the least \emph{failure rate} for interconversion.

\begin{definition}
	Let $r, s \in \mathcal{W}_{d-1}$, their algebraic incomparability is defined as
	\begin{equation}
		I_\mathrm{A}(r, s) = 1 - \max \{ Q(r | s), Q(s | r) \}.
	\end{equation}
\end{definition}

Ref.~\cite{feng2005} details an intuitive procedure: $r' = (1 - \mu + \mu r_1, \mu r_2, \ldots, \mu r_d), \mu \in [0, 1]$. $r_1$ is enhanced while $r_{i > 1}$ are attenuated. $r' \succeq s$ for $\mu = Q(r | s)$ [FIG.~\ref{fig:example}(a)].

A comparison of the measures is made in FIG.~\ref{fig:example}(b). Operationally, besides $I_\mathrm{O}$, a local variant $I_\mathrm{lO}$ (assuming a $2 \times 2$-system) is supplied for contrast. The former is derived by optimizing entropy in $\mathcal{M}_3^-(\eta) \cap \mathcal{M}_3^-(\xi)$. The latter enlists parameterizing local measurements with Bloch polar angles. Note, $\eta^{\mathrm{l}\star} \neq \xi^{\mathrm{l}\star}$. The Euclidean distance is used for $I_\mathrm{D}$ while $I_\mathrm{A}$ is straightforwardly calculated. $I_\mathrm{O}(\xi, \eta) \leq I_\mathrm{lO}(\xi, \eta)$, indicating limited local access. $I_\mathrm{O}(\tilde{\xi}, \eta)$ is non-smooth as $S(\tilde{\xi}) = S(\eta)$ where the optimal party is reversed. Also, $I_\mathrm{O}(\tilde{\xi}, \eta) \approx I_\mathrm{lO}(\tilde{\xi}, \eta)$. $I_\mathrm{D}$ and $I_\mathrm{A}$ are linear with respect to $\| \xi - \eta \|$.

\textit{Isoentropic spectra.}\textemdash It was shown that isoentropic states are either unitarily connected or incomparable~\cite{chattopadhyay2008,li2013}. This was also proven for unified entropies~\cite{liu2014}. Here, we solidify and advance these claims to the extent of strong incomparability for generalized entropies.

Trumpability is equivalent to a series of inequalities for a family of R\'{e}nyi entropy based functions~\cite{klimesh2007,turgut2007}. Let
\begin{equation}
	f_\nu(r) =
	\begin{cases}
		\frac{1}{\nu (1 - \nu)} \log \sum_i r_i^\nu & \nu \neq 0,1, \\
		\frac{1}{d} \sum_i \log r_i & \nu = 0,\\
		- \sum_i r_i \log r_i & \nu = 1,
	\end{cases}
\end{equation}
where $f_0(r)$ is the Burg entropy and $f_1(r)$ is the Shannon entropy. Let $r \neq s \in \mathcal{W}_{d-1}$, suppose extra $d - \max \{ \rank(r), \rank(s) \}$ zeros are omitted from both, $r \succ_T s$ iff $F(\nu, r, s) \equiv f_\nu(s) - f_\nu(r) > 0$ for all $\nu \in (-\infty, \infty)$. This can also be expressed in terms of power majorization~\cite{kribs2013}.

Markedly, all distinct spectra on an isoentropy surface $f_\nu(r)$ are \emph{mutually} strongly incomparable. Since assisted-comparability is dictated by whether the sign of $F(\nu, r, s)$ is consistent with respect to $\nu$. For $r \neq s$, $F(\nu, r, s)$ cannot vanish for all $\nu$. Any discrepancy would include $F(\nu, r, s) = 0$, leading to strong incomparability.

Furthermore, any pair of strongly incomparable spectra is on some isoentropy surface $f_\nu(r)$. When $\rank(r) = \rank(s)$, $F(\nu, r, s)$ is $\nu$-continuous~\cite{klimesh2007}, any change in $\sign [F(\nu, r, s)]$ implies $F(\nu, r, s) = 0$. When $\rank(r) < \rank(s)$, $F(\nu \leq 0, r, s)$ are nonexistent, similarly, $r \nsim_T s$ are isoentropic by $f_{\nu>0}(r)$. 

Hence, the aim of catalysis is to catalyze incomparable spectra that are \emph{not} isoentropic by generalized entropies.

We find that a rigorous measure of a resource concerned with majorization should have it that equivalue states with distinct spectra are incomparable. If this is not fulfilled then such states are not truly equals as conversions between them are possible.

\begin{figure}[t]
	\centering
	\includegraphics[height=0.244\textwidth]{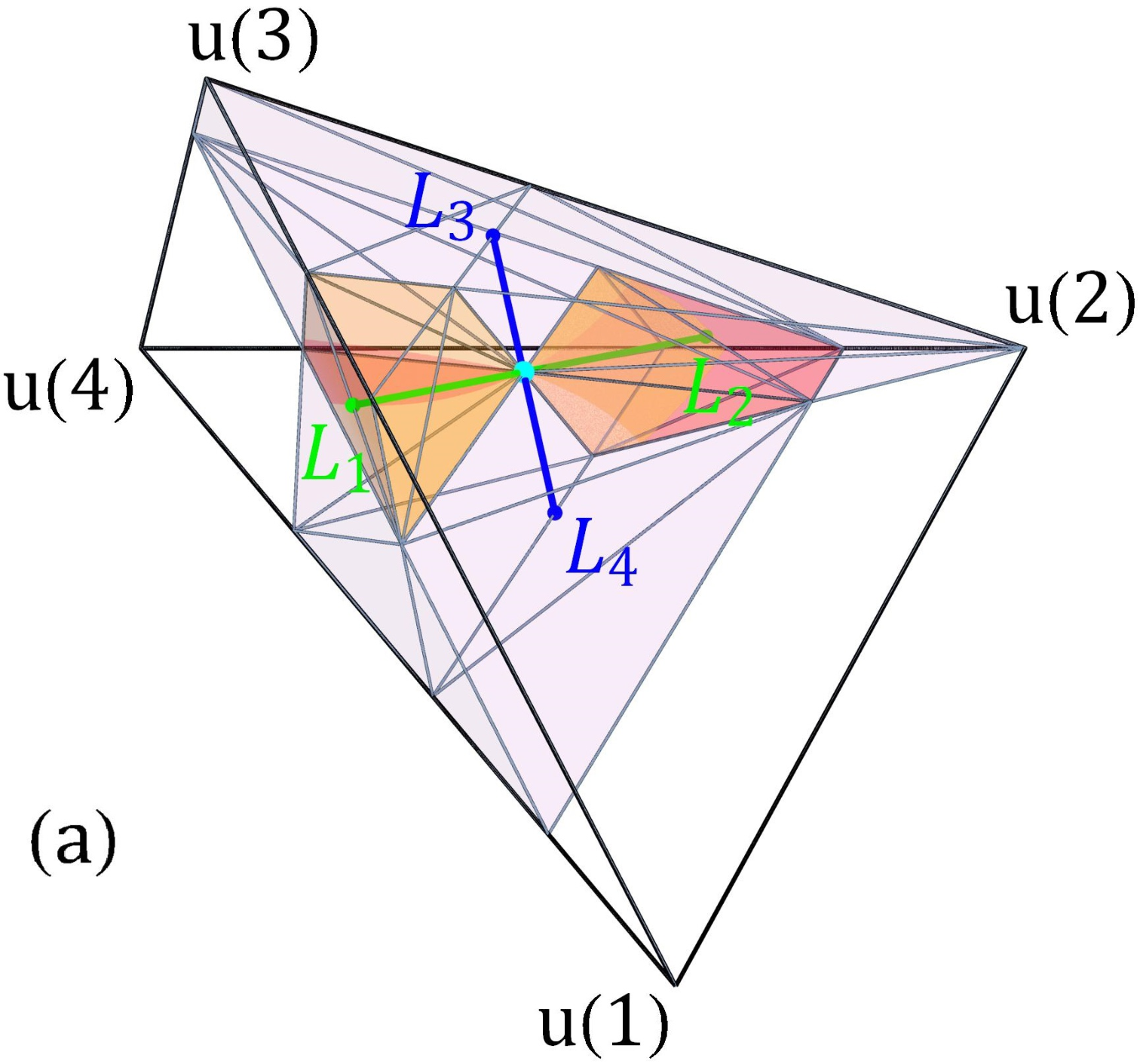}
	\includegraphics[width=0.202\textwidth]{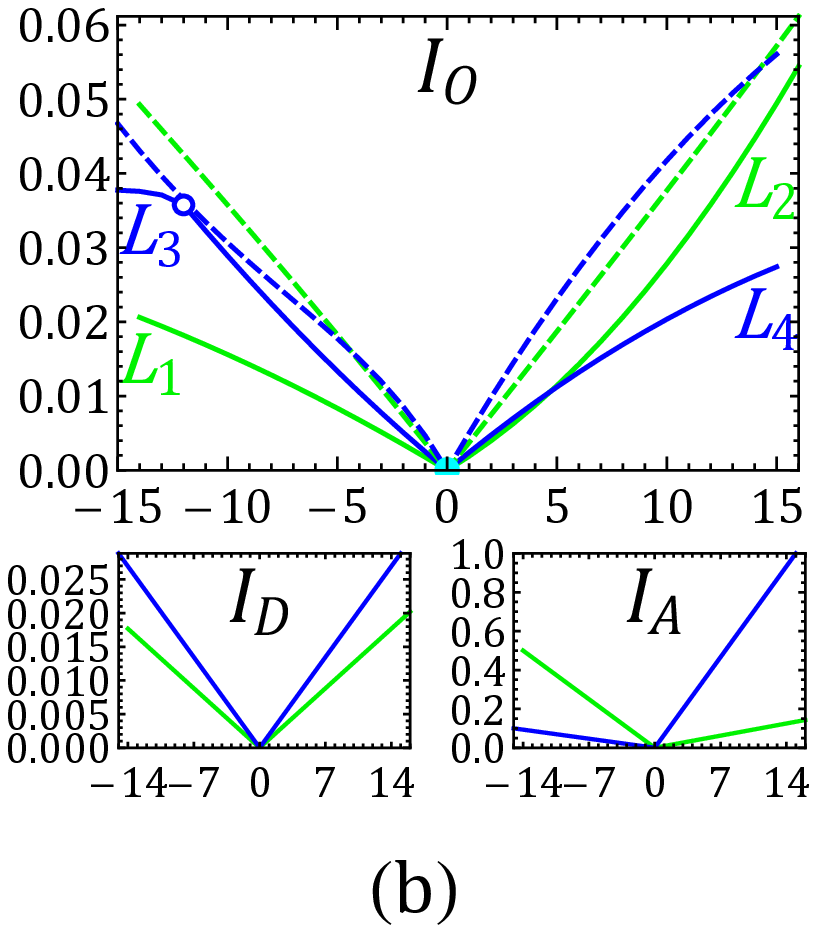}
	\caption{\label{fig:example} (a) Weyl chamber $\mathcal{W}_3$. $\eta = (0.5, 0.3, 0.15, 0.05)$ (cyan). Colored regions $\mathcal{I}_3(\eta)$: $n_\mathcal{I}(\eta) = 1$ (purple), $n_\mathcal{I}(\eta) = 2$ noncatalyzable (red), $n_\mathcal{I}(\eta) = 2$ catalyzable (orange). Red and orange regions are demarcated by $f_0(\eta)$ and $f_1(\eta)$ for either branch. (b) Comparison of $I_\mathrm{(l)O}(\xi, \eta)$, $I_\mathrm{D}(\xi, \eta)$, $I_\mathrm{A}(\xi, \eta)$. $\xi \in L_{1(3)} L_{2(4)}$ [green (blue) lines], horizontal axes reflect their natural length. $I_{\mathrm{lO}}$ (dashed lines). $\tilde{\xi}$ (blank point).}
\end{figure}

\textit{Inversion rank.}\textemdash We introduce the following notion. 
\begin{definition}
	For $r \nsim s$, examining their inequalities in sequence disregarding intermediate equalities, we call each index $l$ where the order reverses an inversion index. The totality of such is named the inversion rank $n_\mathcal{I}(r, s)$.
\end{definition}
We note $n_\mathcal{I}(r, s) \leq d-2$ (excluding $i = 1, d$); when comparable, $n_\mathcal{I}(r, s) = 0$.

Incomparability can be classified by inversion rank and further distinguished by the position and direction of the inversion. We denote for spectral sets incomparable to $r$ by the signs of majorization inequalities $\sign \{ A_j(r') - A_j(r) \}$ written as region vectors. For example in FIG.~\ref{fig:example}(a) we have: $n_\mathcal{I}(\eta) = 0$ comprises $\mathcal{M}_3^-(\eta) = (-0, -0, -0, 0)_\eta$, $\mathcal{M}_3^+(\eta) = (0+, 0+, 0+, 0)_\eta$; $n_\mathcal{I}(\eta) = 1$ comprises $(-, \dot{0}+, \dot{0}+, 0)_\eta$, $(-\dot{0}, -\dot{0}, +, 0)_\eta$, $(+, -\dot{0}, -\dot{0}, 0)_\eta$, $(\dot{0}+, \dot{0}+, -, 0)_\eta$; $n_\mathcal{I}(\eta) = 2$ comprises $(-, + , -, 0)_\eta$, $(+, - , +, 0)_\eta$. Note ``$-0$'' (``$0+$'') signifies inclusion of a boundary, ``$\dot{0}$'' indicates mutually exclusive boundaries, ending ``$0$'' upholds final equality.

We present a condition for strongly incomparability using inversion rank contrapositioning a result in Ref.~\cite{jonathan1999}.

\begin{remark}
	If $n_\mathcal{I}(r, s)$ is an odd number then $r \nsim_T s$.
\end{remark}
\begin{proof}
	Odd $n_\mathcal{I}(r, s)$ means an odd traversal of the Lorenz curves giving $r_1 > s_1$ and $r_d > s_d$. For any $d'$-catalyst $c$, also $r_1 c_1 > s_1 c_1$ and $r_d c_{d'} > s_d c_{d'}$. Thus $A_{11}(r \otimes c) > A_{11}(s \otimes c)$ and $A_{dd'-1}(r \otimes c) < A_{dd'-1}(s \otimes c)$, $r \nsim_T s$.
\end{proof}

The following knowledge is geometrically relevant.

\begin{remark}
	Any set indicated by a region vector is convex.
\end{remark}
\begin{proof}
	Let $t = (1 - p) r + p r'$, if $A_j(r) - A_j(s) > 0$ and $A_j(r') - A_j(s) > 0$ then $A_j(t) - A_j(s) = (1 - p) [A_j(r) - A_j(s)] + p [A_j(r') - A_j(s)] > 0$. Likewise for all cases. 
\end{proof}

\textit{Majorization relations.}\textemdash In this segment we develop some convenient majorization relations.

We raise a family of spectra with nonfixed dimensions, which has been used as a standard for nonuniformity~\cite{gour2015}.

\begin{proposition}
	Let $u(k)$ be a $k$-uniform spectrum,
	\begin{equation}
	\begin{array}{cc}
		u_i(k) = \frac{1}{k} \gamma_i^k, & k \in \mathcal{Z}^+,
	\end{array}
	\label{eq:descend}
	\end{equation}
	where $\gamma_i^k = 1$ ($i \leq k$), $0$ ($i > k$).
	Given $r \in \mathcal{W}_{d-1}$, for a specified $k$ the following hold:
	(1) $r \succ u$ iff $\rank(r) \leq \rank(u)$;
	(2) $u \succ r$ iff $r_1 \leq u_1$;
	(3) $r \nsim u$ iff $\rank(r) > \rank(u)$ and $r_1 > u_1$.
\end{proposition}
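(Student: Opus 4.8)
The plan is to prove the three equivalences by unpacking the definitions of majorization and the uniform spectrum $u(k)$, whose Lorenz curve is a single line segment rising from the origin to the point $(k, 1)$ and then staying flat at $1$. That is, $A_j(u) = j/k$ for $j \le k$ and $A_j(u) = 1$ for $j \ge k$; also $u_1 = 1/k$ and $\rank(u) = k$. Each claim then reduces to comparing the partial sums $A_j(r)$ against this piecewise-linear benchmark.

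For (1), the direction $r \succ u \Rightarrow \rank(r) \le k$ follows because $A_j(r) \ge A_j(u) = 1$ for $j = k$ forces $r$ to have all its mass in its first $k$ entries, i.e. $r_i = 0$ for $i > k$. Conversely, if $\rank(r) \le k$ then $r$ is a probability vector supported on $\{1,\dots,k\}$, and since the uniform distribution on $k$ outcomes is majorized by every probability vector supported there (a standard fact, or directly: $A_j(r) \ge j/k$ by convexity/averaging of the nonincreasing rearrangement), we get $r \succ u$. For (2), $u \succ r$ means $A_j(u) \ge A_j(r)$ for all $j$; the binding constraint is $j = 1$, giving $u_1 = 1/k \ge r_1$, and conversely $r_1 \le 1/k$ together with $r$ nonincreasing yields $A_j(r) \le j r_1 \le j/k = A_j(u)$ for $j \le k$ and $A_j(r) \le 1 = A_j(u)$ for $j \ge k$, so $u \succ r$. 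For (3), since the final equality $A_d(r) = A_d(u) = 1$ always holds, $r \nsim u$ is by definition exactly the conjunction $r \nsucc u$ and $u \nsucc r$; negating (1) gives $r \nsucc u \iff \rank(r) > k$ and negating (2) gives $u \nsucc r \iff r_1 > 1/k = u_1$, so (3) is immediate once (1) and (2) are established — though one should note a small consistency check: $\rank(r) > k$ and $r_1 > 1/k$ can indeed hold simultaneously (e.g. $r$ with a spike above $1/k$ and long tail), so the region is nonempty.

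I do not expect a serious obstacle here; the main subtlety is bookkeeping around the $j \ge k$ versus $j \le k$ ranges of the partial sums and making sure the ``iff'' in each part correctly identifies the single binding inequality ($j = k$ for part 1, $j = 1$ for part 2). A secondary point worth stating carefully is that part (3) genuinely uses that the final-equality clause in the definition of incomparability is automatic here because $u$ and $r$ are both normalized spectra in $\mathcal{W}_{d-1}$, so that ``neither majorizes the other'' already delivers ``incomparable'' without extra hypotheses; once that is observed, (3) is a purely logical consequence of (1) and (2) via the region-vector picture of Remark 2, and no new computation is needed.
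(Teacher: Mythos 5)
Your proposal is correct and follows essentially the same route as the paper's proof: treat $u(k)$ as the maximally mixed spectrum on a $k$-dimensional embedded subspace, identify the single binding partial-sum inequality in each direction ($j=k$ for part 1, $j=1$ for part 2, using uniformity of $u$ against monotonicity of $r$), and obtain part 3 by negating the first two. Your version simply spells out the averaging argument $A_j(r) \geq j/k$ that the paper leaves as ``apparent.''
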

\begin{proof}
	(1) More plainly, $u(k) \in \mathcal{W}_{d-1}$ are of the form and order: $(1, 0, \ldots, 0) \succ (\frac{1}{2}, \frac{1}{2}, 0, \ldots, 0) \succ \ldots \succ (\frac{1}{d - 1}, \ldots, \frac{1}{d - 1}, 0) \succ (\frac{1}{d}, \ldots, \frac{1}{d})$. Clearly, they correspond to maximally mixed states for their respective embedded dimensions, the claim becomes apparent. (2) $u_{i \leq k}(k)$ are uniform while $r_i$ are nonincreasing. (3) When the prior conditions are unmet, $r \nsim u$.
\end{proof}

We see $\mathcal{W}_{d - 1}$ is the convex hull of the extreme points $\{u(k)\}_1^d$ [FIG.~\ref{fig:example}(a)]. As any convex sum of $\{u(k)\}_1^d$ is nonincreasing, and any $r \in \mathcal{W}_{d - 1}$ can be decomposed as: $r = \sum_{n=1}^{d} p_k u(k)$, where $\frac{1}{d} p_d = r_d$, $\frac{1}{k} p_k = r_k - r_{k + 1}, \, k \in \{ 1, \ldots, d - 1 \}$, $(p_k) \in \Delta_{d - 1}$. Additionally, not only is $u(k)$ useless as a catalyst~\cite{jonathan1999}, but also noncatalyzable in related conversions since for $r \nsim u$, provably $n_\mathcal{I}(r, u) = 1$.

We now detail a spectral family constructed from $u(k)$.

\begin{proposition}\label{prop:2}
	Let $\chi(q)$ be the depolarized pure spectrum of a $d$-system,
	\begin{equation}
		\chi(q) = \xoverline[0.9]{u(1)u(d)} \equiv (1 - q)u(1) + qu(d),
		\label{eq:depolar}
	\end{equation}
	where $q \in [0, 1]$. Given $r \in \mathcal{W}_{d-1}$, for a specified $q$ the following hold:
	(1) $r \succ \chi$ iff $r_1 \geq \chi_1$;
	(2) $\chi \succ r$ iff $r_d \geq \chi_d$;
	(3) $r \nsim \chi$ iff $r_1 < \chi_1$ and $r_d < \chi_d$.
\end{proposition}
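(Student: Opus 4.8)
The plan is to leverage the rigid shape of $\chi(q)$. Since $u(1)$ has a single nonzero entry and $u(d)$ is flat, one computes $\chi_1 = 1 - q(d-1)/d$ and $\chi_i = q/d$ for all $i \ge 2$; in particular $\chi_d = q/d$ and $\chi \in \mathcal{W}_{d-1}$. The decisive feature is that the Lorenz curve of $\chi$ is a single straight segment from $(1,\chi_1)$ to $(d,1)$, so that $A_j(\chi) = \chi_1 + (j-1)q/d$ for $1 \le j \le d$. With these formulas in hand, the forward directions of (1) and (2) are immediate from the definition of majorization: $r \succ \chi$ forces $A_1(r) \ge A_1(\chi)$, that is $r_1 \ge \chi_1$; and $\chi \succ r$ forces $A_{d-1}(\chi) \ge A_{d-1}(r)$, which, subtracted from the terminal equality $A_d = 1$, reads $r_d \ge \chi_d$.

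The two converses carry the real content. For (1), assume $r_1 \ge \chi_1$. Since $r$ is nonincreasing, its Lorenz curve is concave, so it lies on or above the chord through $(1,r_1)$ and $(d,1)$, namely $j \mapsto r_1 + (j-1)(1-r_1)/(d-1)$; a one-line computation shows that this chord dominates $A_j(\chi)$ for every $j$ precisely because their difference equals $(r_1-\chi_1)(d-j)/(d-1) \ge 0$. Hence $A_j(r) \ge A_j(\chi)$ throughout, with equality at $j=d$, so $r \succ \chi$. For (2), assume $r_d \ge \chi_d = q/d$ and pass to complementary sums: $A_j(\chi) \ge A_j(r)$ is equivalent to $\sum_{i>j}\chi_i \le \sum_{i>j}r_i$, and the left-hand side is exactly $(d-j)\,q/d$ while the right-hand side is at least $(d-j)\,r_d$ since $r_d$ is the least component of $r$; thus $\chi \succ r$. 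Finally, (3) follows by combining the negations: $r \nsim \chi$ means $r \nsucc \chi$ and $\chi \nsucc r$ (the terminal equality being automatic), which by (1) and (2) is exactly $r_1 < \chi_1$ together with $r_d < \chi_d$.

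The hard part is the converse of (1): once the concavity/chord estimate is set up, everything else is direct arithmetic on partial sums. I would also include a brief remark that $r_1 \ge \chi_1$ and $r_d \ge \chi_d$ cannot hold simultaneously unless $r = \chi$ — the tail-average argument forces $r_d \le \chi_d$ whenever $r_1 \ge \chi_1$ — confirming that the three alternatives in the proposition are exhaustive and mutually disjoint.
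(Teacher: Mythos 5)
Your proof is correct, and for the one step that carries real content---the sufficiency direction of (1)---you take a genuinely different route from the paper. The paper argues by contradiction: it supposes $r \nsim \chi$, locates an inversion index $l$, and uses the uniformity of $\chi_{i>1}$ against the monotonicity of $r$ to rule the inversion out, then still needs a short case split ($r_1 > \chi_1$ versus $r_1 = \chi_1$) to pin down the direction of the resulting comparability. You instead argue directly: the Lorenz curve of $\chi$ is a single straight segment, the Lorenz curve of $r$ is concave and hence dominates its own chord from $(1,r_1)$ to $(d,1)$, and that chord dominates $A_j(\chi)$ with surplus $(r_1-\chi_1)(d-j)/(d-1)\ge 0$. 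This is cleaner, avoids both the contradiction and the case analysis, and makes the geometric mechanism (straight tail versus concave curve) explicit; the paper's version, on the other hand, ties the argument to its inversion-index machinery, which it reuses elsewhere. Your treatments of (2) and (3) coincide with the paper's (complementary sums bounded below by $(d-j)r_d$, then combining the negations), and your closing remark that $r_1\ge\chi_1$ and $r_d\ge\chi_d$ force $r=\chi$ is a worthwhile addition the paper leaves implicit.
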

\begin{proof}
	(1) Necessity is self-evident, we give proof of sufficiency. Let $r_1 \geq \chi_1$, assuming $r \nsim \chi$, then there exists an inversion index $l < d$ where $A_j(r) \geq A_j(\chi), j \in \{1,\ldots,l-1\}$ and $A_l(r) < A_l(\chi)$ or $A_{l + 1}'(r) > A_{l + 1}'(\chi)$. The inversion requires $r_l < \chi_l$, and since $r_i$ are nonincreasing while $\chi_{i > 1}$ are strictly uniform, $A_{l + 1}'(r) < A_{l + 1}'(\chi)$ contradicting the above. Thus $r$ is comparable to $\chi$. When $r_1 > \chi_1$ it must be $r \succ \chi$. When $r_1 = \chi_1$, if $ \chi \succ r$ then $\chi_2 \geq r_2$, due to uniformity of $\chi_{i > 1}$ the only possibility is $r = \chi$ where reflectively $r \succ \chi$. (2) For $\chi \succ r$, equivalently $A_j'(r) \geq A_j'(\chi), j \in \{ 1, \dots, d \}$ with equality for $j = 1$. Necessarily $r_d \geq \chi_d$ whence the other inequalities are implied by uniformity of $\chi_{i > 1}$. (3) When the prior conditions are unmet, $r \nsim \chi$.
\end{proof}

Also, $\chi(q)$ is noncatalyzable since for $r \nsim \chi$, again $n_\mathcal{I}(r, \chi) = 1$. We now derive a majorization criterion implying a general range for incomparability.

\begin{corollary}
	Let $r, s \in \mathcal{W}_{d-1}$, if $r_1 \geq \overline{\chi}_1[s] = 1 + (1-d) s_d$ then $r \succ s$, if $r_d \geq \underline{\chi}_d[s] = \frac{1 - s_1}{d-1}$ then $s \succ r$.
\end{corollary}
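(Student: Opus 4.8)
The plan is to derive the corollary as a direct consequence of Proposition~\ref{prop:2} by constructing, for a given $s$, the two extremal depolarized pure spectra $\overline{\chi}[s]$ and $\underline{\chi}[s]$ that touch $s$'s Lorenz curve at its endpoints. First I would identify $\overline{\chi}[s]$ as the depolarized pure spectrum $\chi(q)$ whose smallest component equals $s_d$: since $\chi(q)_d = q/d$, setting $q/d = s_d$ gives $q = d\,s_d$, whence $\chi_1 = 1 - q + q/d = 1 + (1-d)s_d$, matching the stated $\overline{\chi}_1[s]$. Dually, $\underline{\chi}[s]$ is the $\chi(q)$ whose largest component equals $s_1$: from $\chi(q)_1 = 1 - q + q/d = s_1$ we solve $q(d-1)/d = 1 - s_1$, and then $\chi_d = q/d = (1-s_1)/(d-1)$, matching $\underline{\chi}_d[s]$.

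Next I would chain the majorization relations. For the first claim: by construction $\overline{\chi}_d[s] = s_d$, so $s_1 \ge s_d = \overline{\chi}_d[s]$ trivially (as $s$ is nonincreasing), hence Proposition~\ref{prop:2}(2) applied with $r = s$ and $\chi = \overline{\chi}[s]$ gives $\overline{\chi}[s] \succ s$. On the other hand, if $r_1 \ge \overline{\chi}_1[s]$, then Proposition~\ref{prop:2}(1) applied with $\chi = \overline{\chi}[s]$ gives $r \succ \overline{\chi}[s]$. Transitivity of majorization then yields $r \succ s$. The second claim is the mirror image: by construction $\underline{\chi}_1[s] = s_1$, so $s \succ \underline{\chi}[s]$ follows either from Proposition~\ref{prop:2}(1) (since $s_1 \ge \underline{\chi}_1[s]$) — actually more carefully, we want $\underline{\chi}[s]$ on the low-resource side of $s$, so we use that $s_1 = \underline{\chi}_1[s]$ forces comparability via Proposition~\ref{prop:2}(1) and the nonincreasing structure pins down $s \succ \underline{\chi}[s]$; then if $r_d \ge \underline{\chi}_d[s]$, Proposition~\ref{prop:2}(2) gives $\underline{\chi}[s] \succ r$, and transitivity delivers $s \succ r$.

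The one subtlety I would be careful about is the boundary/degenerate behavior when the constructed $q$ falls outside $[0,1]$ or equals an endpoint — e.g. $s_d = 0$ forces $\overline{\chi}[s] = u(1)$ (pure), and $s_1 = 1$ forces $\underline{\chi}[s] = u(d)$ (maximally mixed); in those edge cases Proposition~\ref{prop:2} still applies since $\chi(0) = u(1)$ and $\chi(1) = u(d)$ are legitimate members of the family, but I would state explicitly that $d\,s_d \in [0,1]$ and $d(1-s_1)/(d-1) \in [0,1]$ always hold for $s \in \mathcal{W}_{d-1}$ so the construction is well-defined. I expect this normalization-and-range bookkeeping to be the only real obstacle; the core argument is just ``sandwich $s$ between the two extremal $\chi$'s and invoke transitivity,'' which is immediate once Proposition~\ref{prop:2} is in hand.
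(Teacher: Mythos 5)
Your proposal is correct and follows essentially the same route as the paper: construct the bounding depolarized pure spectra $\overline{\chi}[s]=\chi(d\,s_d)$ and $\underline{\chi}[s]=\chi\bigl(d\tfrac{1-s_1}{d-1}\bigr)$ so that $\overline{\chi}[s]\succ s\succ\underline{\chi}[s]$ via Proposition~\ref{prop:2}, then apply Proposition~\ref{prop:2} once more to $r$ and conclude by transitivity. Your explicit check that $\overline{q},\underline{q}\in[0,1]$ is a sound addition the paper leaves implicit.
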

\begin{proof}
	Through Proposition~\ref{prop:2} we can find for $r$ its upper and lower pure depolarization bound $\overline{\chi}[s] \equiv \chi(\overline{q} = d s_d)$ and $\underline{\chi}[s] \equiv \chi(\underline{q} = d\frac{1 - s_1}{d - 1})$ respectively such that $\overline{\chi}[s] \succeq s \succeq \underline{\chi}[s]$. When $q \in (\underline{q}, \overline{q})$, $\chi(q) \nsim s$ [Fig.~\ref{fig:lorenz}(b)]. The corollary ensues after a repetition of Proposition~\ref{prop:2} by identifying spectrum $r$ satisfying $r \succ \overline{\chi}[s]$ or $\underline{\chi}[s] \succ r$.
\end{proof}

\textit{Inconvertibility.}\textemdash Nielsen conjectured that the probability of picking at random a pair of incomparable spectra in a $d$-system tends to $1$ as $d \rightarrow \infty$~\cite{nielsen1999}. It was established that densely many of these are in fact strongly incomparable~\cite{clifton2002}. Our objective is to find for a single spectrum its ratio of incomparable spectra in the Weyl chamber. We refer to this as inconvertibility. The higher its value the less likely it is to engage in majorization.

\begin{definition}
	Let $r \in \mathcal{W}_{d-1}$, its inconvertibility $C_d(r)$ is defined as its incomparable spectral volume $V[\mathcal{I}_{d - 1}(r)]$ divided by the volume of the entire chamber $V[\mathcal{W}_{d - 1}]$:
	\begin{equation}
		C_d(r) = \frac{V[\mathcal{I}_{d - 1}(r)]}{V[\mathcal{W}_{d - 1}]}.
	\end{equation}
\end{definition}

Technically, for $V[\mathcal{I}_{d - 1}(r)]$ we calculate the volumes of $\mathcal{M}_{d - 1}^\mp(r)$ by means of computational geometry and subtract them from the whole, i.e., $V[\mathcal{I}_{d - 1}(r)] = V[\mathcal{W}_{d - 1}] - V[\mathcal{M}_{d - 1}^-(r)] - V[\mathcal{M}_{d - 1}^+(r)]$. For $\rank(r) < d$, $\mathcal{M}_{d - 1}^+(r)$ lies on the boundaries of $\mathcal{W}_{d - 1}$ and is negligible. $u(k)$ and $\chi(q)$ can be directly integrated.

We find $C_d[u(d - 1)] = 1 - (d - 1)^{(1 - d)}$. We assess $u(d - 1)$ maximizes $C_d$ since its value tends to $1$ as $d \rightarrow \infty$. Hence, maximal inconvertibility may occur very close to complete disorder which is most convertible.

We choose $\chi(q)$ as a probe of inconvertibility since it spans the entire entropy range and has manifest majorization relations [FIG.~\ref{fig:complex}(a)]. The figure agrees with the intuition of statistical complexity~\cite{lopez-ruiz1995}, affirming the use of $C_d$ as an indicator of complexity~\cite{seitz2016}.

\begin{figure}[t]
	\centering
	\includegraphics[width=0.230\textwidth]{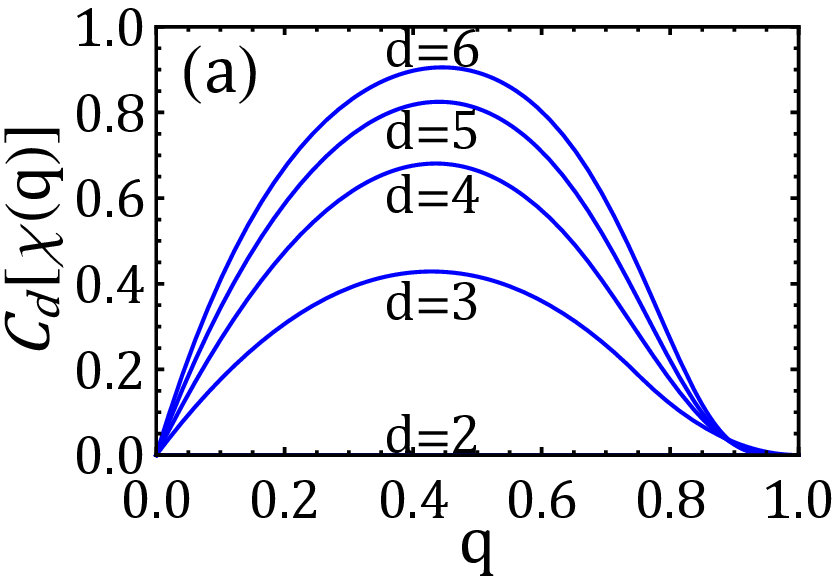}
	\includegraphics[width=0.247\textwidth]{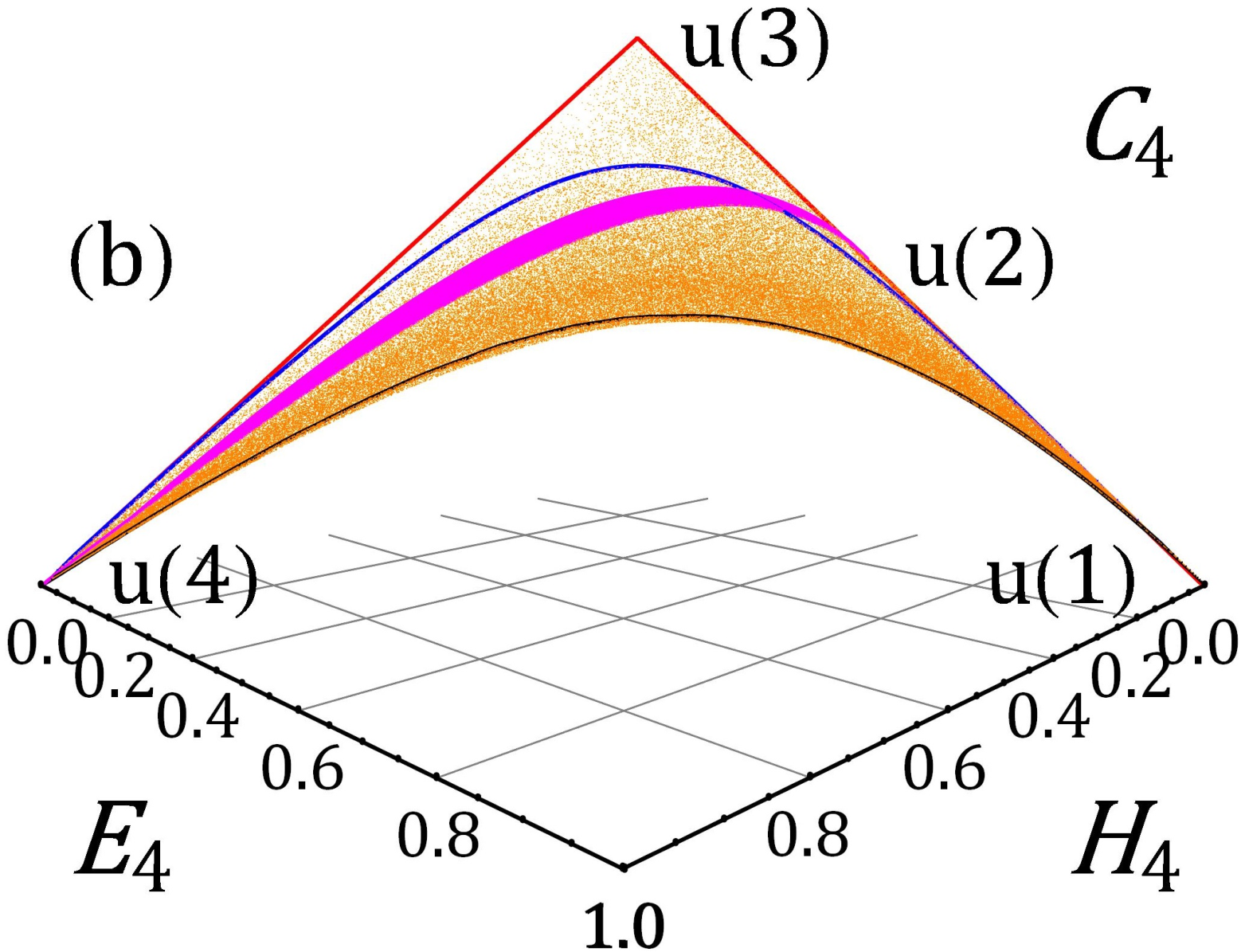}
	\caption{\label{fig:complex} (a) Inconvertibility of $\chi(q)$ for $d \in \{ 2, \ldots, 6 \}$. $C_d[\chi(q)]$ tapers to zero at the extremities and vanishes for $d = 2$. It increases with $d$, as $C_d^\mathrm{m}[\chi(q^\mathrm{m} \rightarrow 0.5)] \rightarrow 1$. Near $q = 1$, the curves cross at different points. (b) Complementary relation of $C_4(r) = 1 - E_4(r) - H_4(r)$ depicted by random points (orange). $u(3)$ sits expectedly at the apex. The edges of $\mathcal{W}_3$, excepting $\chi(q)$ (blue) and $\xoverline[0.9]{u(2)u(4)}$ (magenta) lie on the upperbound (red). In contrast, normalizing the Boltzmann $N$-partition maximally inconvertible by entropy~\cite{seitz2016} yields the spectral analog: $(1 - \frac{k}{N}) u(1) + \frac{k}{N} u(k), k \in \{1,\ldots,N\}$. Withal, $\xoverline[0.9]{u(2)u(4)}$ is an oscillatory band expanding at midrange. Again assuming a $2 \times 2$-system, $\xoverline[0.9]{u(1)u(2)}$ and $\xoverline[0.9]{u(2)u(4)}$ upperbound the product spectral surface $r_1 (1 - r_1 - r_2 - r_3) = r_2 r_3$, the lowerbound of which is close to that of the entire chamber.}
\end{figure}

We believe $V[\mathcal{M}_{d - 1}^\mp(r)]$ are natural reflections of disorder and disequilibrium~\cite{lopez-ruiz1995}. They can be supplemental to related measures by imposing a finer hierarchy, e.g., for isoentropic spectra, greater $V[\mathcal{M}_{d - 1}^+(r)]$ means more parties majorizing $r$, which signifies greater disorder and more utility for say its associated entangled (coherent) pure state. To inquire further, the complementary relation $C_4(r) = 1 - E_4(r) - H_4(r)$ where $E_4(r) = V[\mathcal{M}_3^-(r)] / V[\mathcal{W}_3]$ and $H_4(r) = V[\mathcal{M}_3^+(r)] / V[\mathcal{W}_3]$ is shown by random points in $\mathcal{W}_3$ [FIG.~\ref{fig:complex}(b)]. We surmise that the central region of $C_d$ would compress upwards for greater $d$ with $\chi(q)$ touching the upperbound.

\textit{Conclusion.}\textemdash Throughout this treatise we have studied the circumstance of incomparability in majorization comprehensively under a quantum setting. The rudiments of our theory can be accommodated to more specific contexts. For resource theories it can be adapted to examine the lack of resources inhibiting state conversion. Analysis can thence be conducted on the relations between resource deficiency, infidelity and failure rate. The mutual strong incomparability of generalized isoentropic spectra is an intrinsic trait of entropies underlying the difficulty of related processes. It can also be worthwhile to see what categorization by inversion rank may reveal. The spectral geometry inside a Weyl chamber entails an ulterior layer of attributes pertaining to a state, by which a bridge has been extended to complexity theory via inconvertibility. It may be of interest to know what are the most convertible states by entropy. All in all, we hope to have contributed meaningfully to the discussion.

\bibliography{ciqi}

\end{document}